\documentclass[10pt,leqno]{amsart}
\usepackage{titlesec}
\titleformat{\subsection}
{\normalfont\bfseries}
{\thesubsection}
{1em}
{}
\titleformat{\section}
{\normalfont\Large\bfseries}
{\thesection\quad}
{0pt}
{}

\usepackage{indentfirst,csquotes}
\usepackage{listings}
\usepackage{mathtools}
\usepackage{adjustbox}
\usepackage{stmaryrd}
\usepackage{graphicx}
\usepackage{ulem}
\usepackage[most]{tcolorbox}
\usepackage{mdframed}
\usepackage{amssymb}
\usepackage{algorithm}
\usepackage{algpseudocode}
\usepackage{amsthm}
\usepackage{amsmath}
\usepackage{pbox}
\usepackage{xcolor}
\usepackage{hyperref}
\usepackage{etoolbox}
\usepackage{cleveref}
\usepackage{tikz}
\usepackage{pgfplots}
\usepackage{amsfonts}
\usepackage{enumitem}
\usepackage{subcaption}

\input{preamble.sty}

\newtheorem{theorem}{Theorem}[section]

\newtheorem{problem}[theorem]{Problem}

\newtheorem{proposition}[theorem]{Proposition}

\theoremstyle{definition}
\newtheorem{example}[theorem]{Example}
\theoremstyle{plain}
\crefname{assumption}{Assumption}{Assumptions}
\crefname{problem}{Problem}{Problems}
\numberwithin{equation}{section}

\makeatletter
\renewcommand\paragraph[1]{\@startsection{paragraph}{4}{\parindent}%
{0.1ex \@plus 0.1ex \@minus 0.2ex}% small space above
{-1em}% keep it inline
{\normalfont\itshape\normalsize}*{#1.\quad}}
\makeatother

\hypersetup{
    colorlinks=true,
    linkcolor=blue,
    filecolor=blue,
    urlcolor=blue,
    citecolor=blue
}

\definecolor{lightblue}{RGB}{45,170,225}
\definecolor{bluecustom}{RGB}{15,112,183}
\definecolor{yellowcustom}{RGB}{242,231,109}
\definecolor{greencustom}{RGB}{18,161,154}

\usetikzlibrary{positioning}

\begin{document}
    \title[How Numerical Errors Scale with Hardware]
    {\Large How Numerical Errors Scale with Hardware}
    \author{Niklas Baumgarten}
    \date{\today}

    \maketitle

    \let\thefootnote\relax\footnotetext{\texttt{niklas.baumgarten@uni-heidelberg.de}}

    \maketitle

    \begin{abstract}
        In this short technical note, we outline how to bound numerical errors
        with respect to computational resources.
        We consider algorithms with known complexity estimates that
        can be executed in parallel and combine these estimates with parallelization laws.
        We suppose that the proposed approach generalizes
        beyond the presented examples
        and argue that it provides a straightforward and
        fair framework for comparing algorithms,
        hardware systems, and software implementations.
        This is achieved by imposing identical
        resource or energy budgets on each configuration,
        and comparing the resulting numerical error evaluations.
    \end{abstract}

    \section{Introduction}\label{sec:introduction2}

Energy consumption and the availability of
hardware resources are among the main limiting factors in
achieving better computational results.
This is particularly true for highly parallel applications,
where large computing systems work toward a common goal,
such as estimating quantities of interest for highly uncertain processes,
simulating fluid flow in three spatial dimensions,
or training a large language model.

To describe the performance of algorithms on parallel hardware,
Amdahl's law~\cite{amdahl1967validity} characterizes the maximum
speedup achievable for a fixed-size problem when only a portion
of the computation can be parallelized.
Gustafson's law~\cite{gustafson1988reevaluating} revisits
this analysis by assuming that the problem size
scales with the number of processing units,
deriving a corresponding scaled speedup relative to serial execution.

For many numerical algorithms in scientific computing and machine learning,
however, the problem size is related nonlinearly to
the numerical error one seeks to minimize.
Neural scaling laws~\cite{bahri2024explaining}, for example,
investigate how the number of model parameters and the amount
of training data influence the training loss.
When the amount of training data and the number of model
parameters are scaled in tandem with a compute-optimal strategy,
the training loss can be expressed directly as a function
of the total training compute~\cite{kaplan2020scaling}.

In this note, we want to outline how to connect parallelization laws
with error-vs-cost relations, formally described by complexity analysis.
This creates a direct connection between hardware
resources and achievable accuracies,
and thereof a direct relation between errors and invested energy.
For this we establish a relative error reduction,
rather than the speedup as in~\cite{amdahl1967validity, gustafson1988reevaluating},
achieved by executing an algorithm on a parallel system.
The here outlined proposition summarizes and generalizes observations made
in~\cite{baumgarten2025multilevel,baumgarten2025budgeted,baumgarten2024fully}.
%to algorithms with power-laws as complexity.

The approach is motivated by a knapsack problem:
minimize a numerical error,
subject to the constraints imposed by the computing system.
These constraints typically include the reserved computing time,
the available memory or data and the number of processing units such as CPUs, GPUs, or TPUs.

\clearpage

\begin{problem}[Minimal Error Knapsack]
    \label{problem:minimal-error-knapsack}
    Find an algorithm $\set{\pi_k}_{k}$ (a sequence of decisions)
    that minimizes some error estimate $\widehat{\mathrm{err}}_k$
    of the problem of interest within given computational constraints
    \begin{align*}
        \min_{\set{\pi_k}_{k}} \, \widehat{\mathrm{err}}_k
        \,\,\, \text{ s.t. } \,\,\,
        &\rC^{\mathrm{Op}}_k
        \leq P \cdot T, \,\,\,\,
        \rC_k^{\mathrm{Mem}} < M
        \,\,\, \text{ and } \,\,\,
        \rC_k^{\mathrm{Data}} < D \,.
    \end{align*}
    We denote by $\rC^{\mathrm{Op}}_k$ the operational cost,
    by $\rC^{\mathrm{Mem}}_k$
    the maximal memory footprint over the runtime,
    and by $\rC^{\mathrm{Data}}_k$ the amount of
    data processed up until step $k$.
    The constraints comprise a time budget $T > 0$,
    a fixed number of computing units $P \in \NN$,
    the total memory capacity of the system $M > 0$
    and the total amount of available data $D \geq 0$.
\end{problem}

Finding solutions to the resource
allocation~\Cref{problem:minimal-error-knapsack}
through an optimal sequence of decisions $\set{\pi_k}_k$
is an NP-hard optimization problem,
likely with non-linear objective functions and constraints.
Expressing this non-linear relationship between the error $\widehat{\mathrm{err}}_k$
(e.g., an estimator variance, a finite element error, or a training loss)
and the operational complexity $\rC^{\mathrm{Op}}_k$
(e.g., the number of necessary floating point operations, or the number of function evaluations)
is commonly done though complexity analysis identifying bounds of the form:
For any fixed accuracy $\epsilon \in (0, 1]$,
the expression $f \colon (0, 1] \rightarrow \RR_{\geq 0}$
bounds the operational complexity
\begin{equation}
    \label{eq:complexity-result}
    \rC_{\epsilon}^{\mathrm{Op}} \leq c \, f(\epsilon)
    \quad \text{ to achieve } \quad
    \widehat{\mathrm{err}} < \epsilon \, .
\end{equation}
The constant $c > 0$ is generally not known and
may dependent on the condition-number of the problem as
well as the actual implementation of the algorithm.
The function $f$ is assumed to be monotonically increasing as
$\epsilon \rightarrow 0$, i.e., achieving smaller errors is
associated with higher computational costs.
This is a natural result everywhere
in machine learning or scientific computing,
causing long computing times
and large hardware requirements in practice.
Complexity analysis allows us to quantify how good an algorithm
$\set{\pi_k}_k$ is in terms of the resources it needs to achieve a certain accuracy $\epsilon \in (0, 1]$,
enabling comparisons of different algorithms
with respect to their resource requirements.

The memory and data constraints in~\Cref{problem:minimal-error-knapsack}
typically determine the smallest achievable error for a given algorithm.
For example, in machine learning, limited training data introduces bias or
may lead to overfitting.
Similarly, insufficient memory can prevent the construction of a
larger finite element basis in numerical simulations.
Despite their importance, memory and data constraints
are here not the primary investigation goal,
we will come back to this only shortly in a final discussion.

    \section{Proposition}
\label{sec:proposition}

While very useful, complexity analysis alone does not
address key questions in practice:
Which error can be achieved within a given time frame?
How much can additional hardware resources improve the result?
Or, what is the smallest possible error that can be achieved on a given system?
This becomes even less clear if we consider the
influence of parallelization inefficiencies and
non-uniform memory architectures.

Addressing the above requires precise knowledge
of the constant $c$ in~\eqref{eq:complexity-result}.
In practice, however, this constant is usually unknown
due to incomplete information about the problem of interest
and the complete implementation stack.
By instead considering~\Cref{problem:minimal-error-knapsack},
we can circumvent knowing constant $c$ in~\eqref{eq:complexity-result},
at the cost of designing algorithms $\set{\pi_k}_k$
that implement greedy strategies or dynamic programming techniques.
For algorithms that (i) solve~\Cref{problem:minimal-error-knapsack},
(ii) are implemented in parallel,
and (iii) for which the dominant term in the
complexity estimate is multiplicative and
can be distributed across processing units,
we state the following.

\begin{proposition}
    \label{proposition:budgeted-error-bound}
    Let $\lambda_{s} \in (0, 1]$ and $\lambda_{p} \in [0, 1)$
    represent the non-parallelizable and parallelizable fractions
    of algorithm $\set{\pi_k}_k$ such that $\lambda_{s} + \lambda_{p} = 1$.
    Suppose this algorithm follows the operational complexity~\eqref{eq:complexity-result}
    and that $f \colon (0, 1] \rightarrow \RR_{\geq 0}$ is multiplicative (e.g.~a power-law), i.e.,
    \begin{equation}
        \label{eq:multiplicative-complexity}
        f(\epsilon_1 \cdot \epsilon_2) = f(\epsilon_1) \cdot f(\epsilon_2)
    \end{equation}
    and invertible, i.e., $f^{-1} \colon \RR_{\geq 0} \rightarrow (0, 1]$ exists.
    Then
    \begin{equation}
        \label{eq:budgeted-result}
        \widehat{\mathrm{err}}_k
        \,\,\, \lesssim \,\,\,
        {\lambda_{s} f^{-1}(T)}
        + \lambda_{p} f^{-1}(P \cdot T) \,,
    \end{equation}
    if every processing unit can handle a sufficiently large
    portion of the dominant term in the complexity estimate.
\end{proposition}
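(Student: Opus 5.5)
The plan is to treat the statement as a heuristic Amdahl-type splitting of the error budget. I would invert the complexity bound~\eqref{eq:complexity-result} separately for the serial and the parallel part, and then recombine the two pieces using the multiplicativity~\eqref{eq:multiplicative-complexity} together with a triangle-type inequality on the error. The constant $c$ and all lower-order effects will be absorbed into the symbol $\lesssim$. This is consistent with the paper's remark that solving \Cref{problem:minimal-error-knapsack} lets one circumvent knowing $c$.

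\textbf{Step 1: decompose the error.} I would write the algorithm's work as a serial component, carrying the fraction $\lambda_s$, and a parallel component, carrying the fraction $\lambda_p$. The error estimate is then bounded by a convex combination of the errors reachable by each component,
\begin{equation*}
\widehat{\mathrm{err}}_k \lesssim \lambda_s \, \epsilon_s + \lambda_p \, \epsilon_p .
\end{equation*}
Here $\epsilon_s$ and $\epsilon_p$ denote the accuracies attainable by the respective components under their budgets. I would justify this by assuming that the error contributions are additive in the same proportions as the work, which is the Amdahl-style modelling assumption.

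\textbf{Step 2: budgets of the two components.} The serial part runs on a single unit for time $T$, so its available operations are $T$. The parallel part is distributed over $P$ units, so its available operations are $P\cdot T$. Because \Cref{problem:minimal-error-knapsack} is solved, the algorithm exhausts its budget. From~\eqref{eq:complexity-result} I would therefore get the saturation relations $c\,f(\epsilon_s)\approx T$ and $c\,f(\epsilon_p)\approx P\cdot T$.

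\textbf{Step 3: invert and remove the constant.} Since $f$ is invertible and monotone, the relations of Step 2 give $\epsilon_s = f^{-1}(T/c)$ and $\epsilon_p = f^{-1}(P\cdot T/c)$. Multiplicativity implies that $f^{-1}$ is multiplicative as well, via $f(f^{-1}(a)f^{-1}(b)) = ab$. Hence
\begin{equation*}
f^{-1}(T/c) = f^{-1}(T)\,f^{-1}(1/c),
\end{equation*}
and likewise for the parallel term. The factor $f^{-1}(1/c)$ is independent of $T$ and $P$, so it disappears into $\lesssim$. Inserting these into Step 1 yields~\eqref{eq:budgeted-result}.

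\textbf{Main obstacle.} The difficult step is justifying that the parallel component really attains $f^{-1}(P\cdot T)$. This needs the dominant multiplicative term of the complexity to split across the $P$ units without loss, which is exactly the hypothesis that each unit handles a sufficiently large portion of that term. I would make this precise by requiring that, per unit, the dominant term exceeds the communication and overhead costs. This guarantees that the parallel overhead only alters the constant, and so is again absorbed into $\lesssim$. Without that condition, the effective budget would be smaller than $P\cdot T$ and the second term would fail.
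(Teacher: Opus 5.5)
Your argument reaches~\eqref{eq:budgeted-result} by a genuinely different decomposition, and it is at the same heuristic level as the paper's proof.

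The paper never assigns accuracies to the two parts separately. It splits the serial \emph{cost} $\rC^{\mathrm{Op}}_{\epsilon,\mathrm{ref}}$ Amdahl-style and introduces a generic speedup $f^*(P)$ for the parallel fraction. It then forms the cost reduction factor $R=\lambda_s+\lambda_p f^*(P)$ and transfers $R$ to the error by the explicitly acknowledged assumption $\epsilon_{\mathrm{par}}\lesssim R\, f^{-1}(T)$. The term $f^{-1}(P\cdot T)$ appears only at the very end, via multiplicativity $f^{-1}(P)f^{-1}(T)=f^{-1}(P\cdot T)$ and the identification $f^*(P)=f^{-1}(P)$; that identification is how the paper encodes the ``sufficiently large share'' hypothesis.

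You instead place the modelling assumption directly on the error, through the $\lambda$-weighted split $\lambda_s\epsilon_s+\lambda_p\epsilon_p$, and give the parallel part the budget $P\cdot T$ from the start. You use multiplicativity only to strip the unknown constant via $f^{-1}(T/c)=f^{-1}(T)f^{-1}(1/c)$. The paper does this step silently inside $\lesssim$, even though it is a constant factor only because $f$ is multiplicative; for, say, $f(\epsilon)=\log(1/\epsilon)$ it would not be.

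What your route buys:
\begin{itemize}
\item It makes transparent why the factor is $f^{-1}(P)$: ideal work scaling $T\mapsto P\cdot T$ pushed through $f^{-1}$.
\item It makes explicit where $c$ goes.
\end{itemize}
What the paper's route buys:
\begin{itemize}
\item It keeps the decomposition on an additive, well-defined quantity, namely cost. Your Step~1 instead needs an ``accuracy attainable by the serial component'', which has no intrinsic meaning; think of the reduction step of a Monte Carlo estimator.
\item It yields the intermediate bound $\lambda_s f^{-1}(T)+\lambda_p f^*(P)f^{-1}(T)$ for arbitrary speedup profiles. This isolates the hypothesis as the testable condition $f^*=f^{-1}$, which is exactly what the regression~\eqref{eq:error-with-hardware} checks.
\end{itemize}

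One precision point: \eqref{eq:complexity-result} is only an upper bound on the cost, so it does not give the saturation $c\,f(\epsilon_s)\approx T$. What it does give, and all you need, is that accuracy $f^{-1}(T/c)$ is reached within budget $T$. That is, $\epsilon_s\le f^{-1}(T/c)$, and likewise $\epsilon_p\le f^{-1}(P\cdot T/c)$.
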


\begin{proof}
    Following~\cite{baumgarten2024fully, baumgarten2025multilevel},
    we present generalized arguments.
    Let $\rC_{\epsilon,\mathrm{ref}}^{\mathrm{Op}}$
    denote the serial execution time required to reach accuracy $\epsilon \in (0, 1]$.
    Similar to~\cite{amdahl1967validity, gustafson1988reevaluating},
    we decompose the execution time into a parallelizable $\lambda_{p} \in [0, 1)$
    and a non-parallelizable fraction $\lambda_s = (1 - \lambda_{p})$ by
    \begin{align*}
        \rC_{\epsilon,\mathrm{ref}}^{\mathrm{Op}}
        = (1 - \lambda_{p}) \rC_{\epsilon,\mathrm{ref}}^{\mathrm{Op}}
        + \lambda_{p} \rC_{\epsilon,\mathrm{ref}}^{\mathrm{Op}} \,.
    \end{align*}
    On $P \in \NN$ processing units, the execution time reduces to
    \begin{align*}
        \rC_{\epsilon,\mathrm{par}}^{\mathrm{Op}}
        = (1 - \lambda_{p}) \rC_{\epsilon,\mathrm{ref}}^{\mathrm{Op}}
        + \lambda_{p} f^{*}(P) \rC_{\epsilon,\mathrm{ref}}^{\mathrm{Op}} \,,
    \end{align*}
    where $f^{*} \colon \NN \to (0,1]$ is monotone decreasing,
    speeding up the computation to reach accuracy $\epsilon \in (0,1]$.
    The resulting reduction factor is
    \begin{align*}
        R
        \coloneqq \rC_{\epsilon,\mathrm{par}}^{\mathrm{Op}} / \rC_{\epsilon,\mathrm{ref}}^{\mathrm{Op}}
        = (1 - \lambda_{p}) + \lambda_{p} f^{*}(P) \leq 1 \,.
    \end{align*}
    Now, we fix a time budget $T > 0$ and search for the achievable
    accuracies $\epsilon_{\mathrm{ref}}$ on serial and $\epsilon_{\mathrm{par}}$
    on parallel hardware.
    Inverting~\eqref{eq:complexity-result} yields
    $\epsilon_{\mathrm{ref}} \lesssim f^{-1}(T)$ on serial hardware.
    For parallel hardware, using the reduction $R$ gives
    \begin{align*}
        \epsilon_{\mathrm{par}}
        \lesssim R \cdot f^{-1}(T)
        = (1 - \lambda_{p}) f^{-1}(T) + \lambda_{p} f^{*}(P) f^{-1}(T).
    \end{align*}
    The stated bound follows by~\eqref{eq:multiplicative-complexity}
    and if $f^*(P)=f^{-1}(P)$ on $\NN$,
    which holds when each processing unit handles a
    sufficiently large share of the total workload.
\end{proof}

\Cref{proposition:budgeted-error-bound} combines complexity estimates of the
form~\eqref{eq:complexity-result} with parallelization laws.
It assumes that $\epsilon_{\mathrm{par}}$ scales similarly to $R \cdot \epsilon_{\mathrm{ref}}$
and that processing units are interchangeable with computing
time in the parallelizable part of the complexity.
This behavior has been observed in practice
(e.g.~\cite{baumgarten2024fully, baumgarten2025multilevel})
and we conjecture that it also holds for other algorithms,
discussing applications in the next section.
Before doing so, we draw attention to~\Cref{fig:figure},
based on~\cite[Fig.~7]{baumgarten2025multilevel}
and~\cite[Fig.~12]{baumgarten2024fully}.

\begin{figure}
    \centering
    \begin{tikzpicture}[scale=0.55]
    \scriptsize{
        \draw[->] (0,0) -- (6,0) node[right] {};
        \draw[->] (0,0) -- (0,5) node[left] {$\widehat{\mathrm{err}}_k$};

        \draw[thick] (0,4.5) -- (5,3.0) node[right] {$P_0$};
        \draw[thick] (0,3.7) -- (5,2.2) node[right] {$2 \, P_0$};
        \draw[thick] (0,3.0) -- (5,1.5) node[right] {\dots};
        \draw[thick] (0,2.4) -- (5,0.9) node[right] {\dots};
        \draw[thick] (0,1.9) -- (5,0.4) node[right] {$16 \, P_0$};

        \newcommand{\xval}{2.8}
        \pgfmathsetmacro{\yA}{4.5 - 0.3*\xval}
        \pgfmathsetmacro{\yB}{3.7 - 0.3*\xval}
        \pgfmathsetmacro{\yC}{3.0 - 0.3*\xval}
        \pgfmathsetmacro{\yD}{2.4 - 0.3*\xval}
        \pgfmathsetmacro{\yE}{1.9 - 0.3*\xval}
        \draw[dashed] (\xval,\yA) -- (\xval,\yB);
        \draw[dashed] (\xval,\yB) -- (\xval,\yC);
        \draw[dashed] (\xval,\yC) -- (\xval,\yD);
        \draw[dashed] (\xval,\yD) -- (\xval,\yE);

        \fill (\xval,\yA) circle (2pt);
        \fill (\xval,\yB) circle (2pt);
        \fill (\xval,\yC) circle (2pt);
        \fill (\xval,\yD) circle (2pt);
        \fill (\xval,\yE) circle (2pt);

        \node[left] at (0.1+\xval,{(\yA+\yB)/2+0.1}) {\tiny{$\Delta_1$}};
        \node[left] at (0.1+\xval,{(\yD+\yE)/2+0.1}) {\tiny{$\Delta_2$}};

        \draw (0.0,-0.04) -- (0.0,0.08) node[below] {$0$};
        \draw (5.0,-0.04) -- (5.0,0.08) node[below] {$T$};
    }
    \end{tikzpicture}
    \hspace{0.5cm}
    \begin{tikzpicture}[scale=0.55]
        \scriptsize{
            \draw[->] (0,0) -- (6,0) node[right] {};
            \draw[->] (0,0) -- (0,5) node[left] {$\widehat{\mathrm{err}}_k$};

            \draw (0.0,-0.04) -- (0.0,0.08) node[below] {$P_0$};
            \draw (1.25,-0.04) -- (1.25,0.08) node[below] {};
            \draw (2.5,-0.04) -- (2.5,0.08) node[below] {};
            \draw (3.75,-0.04) -- (3.75,0.08) node[below] {};
            \draw (5.0,-0.04) -- (5.0,0.08) node[below] {$16\,P_0$};

            \newcommand{\xvall}{5.0}
            \pgfmathsetmacro{\yA}{4.5 - 0.3*\xvall}
            \pgfmathsetmacro{\yB}{3.7 - 0.3*\xvall}
            \pgfmathsetmacro{\yC}{3.0 - 0.3*\xvall}
            \pgfmathsetmacro{\yD}{2.4 - 0.3*\xvall}
            \pgfmathsetmacro{\yE}{1.9 - 0.3*\xvall}

            \fill (0.0,\yA) circle (2pt);
            \fill (1.25,\yB) circle (2pt);
            \fill (2.5,\yC) circle (2pt);
            \fill (3.75,\yD) circle (2pt);
            \fill (5.0,\yE) circle (2pt);

            \def\a{-1.571924}
            \def\b{4.606112}
            \draw[dashed,samples=100,domain=0:5]
            plot ({\x},{\a+\b*pow(2,-0.24*\x)});
        }
    \end{tikzpicture}
    \vspace{-0.1cm}
    \caption{
        \underline{Left:} Example of log-log error decay over time points up until $T$
        with a power law as complexity function in~\eqref{eq:complexity-result}
        and for different values of $P\in\set{P_0,2P_0,\dots,16P_0}$, $P_0 \in \NN$. \\[1mm]
        \underline{Right:} Achieved error estimates $\widehat{\mathrm{err}}_{k}$
        within a fixed time budget $T$ and line fit as dashed line to~\eqref{eq:error-with-hardware}.
    }
    \label{fig:figure}
\end{figure}

The left illustration in~\Cref{fig:figure} captures the
diminishing error reductions obtained with additional hardware.
This can be seen by comparing the distance $\Delta_1$ between the lines
for $P_0$ and $2P_0$ with the distance $\Delta_2$ between those for $8P_0$ and $16P_0$.
The latter is smaller, $\Delta_2 < \Delta_1$,
which can be explained by~\eqref{eq:budgeted-result}:
as $P$ increases, the first term, which is independent of $P$,
becomes increasingly important.
To illustrate this, the right panel of~\Cref{fig:figure}
takes the achieved error estimates $\widehat{\mathrm{err}}_k$ at
the fixed final time $T$ and plots them against
$P \in \set{P_0,2P_0,\dots,16P_0}$.
Equation~\eqref{eq:budgeted-result} then becomes
\begin{equation}
    \label{eq:error-with-hardware}
    \widehat{\mathrm{err}}_k \lesssim \epsilon_s + c_p \, f^{-1}(P) \,,
\end{equation}
where $\epsilon_s$ and $c_p$ can be determined by regression,
since $f^{-1}$ is either known theoretically or observed
empirically as a function of $T$.
This regression yields an error within the
measurement noise if $f^*(P)=f^{-1}(P)$ and
thus, observing the right plot of~\Cref{fig:figure}
confirms the proposition and that its assumptions are given.

In~\cite{baumgarten2024fully}, it is motivated to call the term $\epsilon_s$
parallelization bias, as $P \rightarrow \infty$, the bound~\eqref{eq:budgeted-result}
reduces to $\widehat{\mathrm{err}}_k \lesssim \lambda_s f^{-1}(T) \eqqcolon \epsilon_s$.
Thus, regardless of how much hardware is available,
the first summand can only be reduced by
increasing the available computing time.
In practice, the NP-hardness of~\Cref{problem:minimal-error-knapsack}
can be addressed with greedy algorithms or dynamic programming,
both of which require the algorithm
to exchange information at some point,
implying $\lambda_s \neq 0$ and thereof $\epsilon_s \neq 0$.

As a consequence of bound~\eqref{eq:budgeted-result},
we can also impose energy budgets, i.e.,
given a fixed total amount of Watt-hours,
we identify the best algorithm, hardware system,
or software implementation by comparing the achieved errors.
Here, the first term on the right-hand side of~\eqref{eq:budgeted-result}
is associated with the energy consumption of the predominantly idle system,
whereas the second term directly translates the number of
processing units into power requirements.
Based on electricity prices and energy mix,
even monetary or carbon dioxide
budgets may be imposed.

    \section{Examples}\label{sec:examples}

The observation made in~\Cref{proposition:budgeted-error-bound}
can be applied to a wide range of algorithms.
We shortly discuss four examples.

\begin{example}[Monte Carlo Estimation]
    Let $X$ be a random variable with unknown finite variance.
    A Monte Carlo method to estimate the expected value $\EE[X]$
    is given by
    \begin{equation}
        \label{eq:mc-estimator}
        E_{M,s}^{\mathrm{MC}}[X]
        = \frac{1}{M} \sum_{m=1}^M X^{(m)},
        \,\,
        \text{ with } X^{(m)} \text{ i.i.d. .}
    \end{equation}
    The root mean squared error
    follows by the central limit theorem $\cO(1/\sqrt{M})$,
    i.e., $f(\epsilon) = \epsilon^{-2}$.
    Estimator~\eqref{eq:mc-estimator} can be implemented
    greedily and embarrassingly parallel to
    solve~\Cref{problem:minimal-error-knapsack}
    and thus satisfies all assumptions
    in~\Cref{proposition:budgeted-error-bound}.
    Numerical verification of this can
    be found in~\cite{baumgarten2024fully},
    applying multilevel Monte Carlo methods
    to acoustic wave equations.
\end{example}

\begin{example}[Gradient Descent]
    Let $x^* = \mathrm{argmin}_{x \in \RR^N} F(x)$.
    If $F$ is $\mu$-strongly convex, $L$-smooth
    and for sufficient step size $\tau > 0$,
    the iteration scheme
    \begin{equation}
        \label{eq:gradient-descent}
        x^{(k+1)} = x^{(k)} - \tau \nabla F(x^{(k)})
    \end{equation}
    converges linearly, and achieving some $\epsilon \in (0, 1]$, results in
    $\cO(\log (1 / \epsilon))$ number of steps~\cite[Sec.~9.3]{boyd2004convex}.
    Since this does not satisfy the assumption and~\eqref{eq:gradient-descent}
    is a sequential iteration scheme, \Cref{proposition:budgeted-error-bound} is not directly applicable.
    However, if computational resources also have to be invested
    into other tasks, e.g~function evaluation or batch estimation,
    and aligning the iteration index of~\eqref{eq:gradient-descent}
    with algorithm $\set{\pi_k}_k$,~\Cref{proposition:budgeted-error-bound}
    again becomes applicable.
    For example, estimate~\eqref{eq:budgeted-result}
    is observed in~\cite{baumgarten2025multilevel}
    for variants of batched stochastic gradient descent.
\end{example}

\begin{example}[Finite Element Methods]
    FEMs approximate functions on bounded spatial domains $\cD \subset \RR^d$
    by decomposing them into finitely many cells of diameter $h > 0$,
    resulting in large algebraic systems.
    The discretization error typically decays as $\cO(h^{\alpha})$,
    where the rate $\alpha > 0$ depends on the regularity of the solution
    and the chosen discretization scheme,
    whereas the computational cost, assuming optimal solvers,
    scales as $\cO(h^{-d})$.
    Achieving a FEM error below $\epsilon \in (0,1]$
    therefore requires computational work following
    $f(\epsilon) = \epsilon^{-d/\alpha}$ (see, e.g.,~\cite{cliffe2011multilevel}).
    However, finding an algorithm that implements a FEM and
    solves~\Cref{problem:minimal-error-knapsack} is challenging.
    In practice, the mesh size $h$ is typically refined geometrically,
    so the constraints must be adapted accordingly.
    For example, halving $h$ requires eight folding
    $P$ for $d=3$ and a fixed $T > 0$.
    Consequently, solving~\Cref{problem:minimal-error-knapsack}
    amounts to a classical weak scaling experiment.
    Despite the fact that FEM and
    solving~\Cref{problem:minimal-error-knapsack}
    can not be combined in a flexible way,
    embedding FEM within statistical estimators or
    gradient descent methods make it possible
    again to apply~\Cref{proposition:budgeted-error-bound}
    as done in~\cite{baumgarten2025multilevel, baumgarten2024fully}.
\end{example}

\begin{example}[Training of Neural Networks]
    The neural scaling law~\cite[Eq.~(1.3)]{kaplan2020scaling},
    setting the training loss $L$ and a compute budget $C_{\min}$
    in relation through the rate $\alpha_C^{\min} > 0$,
    \begin{equation}
        \label{eq:neural-scaling-law}
        L(C_{\min}) \propto (C_{\min})^{-\alpha_C^{\min}}
    \end{equation}
    has been found through balancing dataset size and trainable parameters,
    optimally allocating the compute budget, which is related
    to solving~\Cref{problem:minimal-error-knapsack}.
    Equation~\eqref{eq:neural-scaling-law} presents a similar
    as~\eqref{eq:budgeted-result}, however,
    we argue that separating $C_{\min}$ into
    $P \cdot T$ presents deeper insights how additional hardware
    actually improves the computational result.
    Hence, we argue that~\Cref{proposition:budgeted-error-bound}
    is applicable to neural scaling laws for compute.
\end{example}

    \section{Conclusion and Discussion}\label{sec:discussion}
\Cref{problem:minimal-error-knapsack},
which is commonly encountered in machine learning and scientific computing,
can be viewed similarly to approaches in the
Theory of Constraints~\cite{goldratt2016goal, csimcsit2014theory}. % ronen1992managing
It identifies a performance measure (a numerical error),
as well as system constraints
(e.g., available processing units, allocated computing time,
memory capacity, data availability, or even energy budgets).
The validity of~\Cref{proposition:budgeted-error-bound}
requires examination beyond the presented examples,
but it provides a straightforward argument for whether and how
lifting constraints improves computational results.

This note focuses on exploiting the computing-time constraint
by combining complexity analysis with parallelization laws,
leading to~\Cref{proposition:budgeted-error-bound}.
Depending on the application, memory capacity may instead be
the limiting factor; see, e.g.,~\cite{baumgarten2025budgeted}.
Similarly, data availability may become the dominant constraint,
as outlined, for example, by scaling laws~\cite{kaplan2020scaling}
with respect to dataset size.
With this note, we advocate for approaching computational problems
from the perspective of physical constraints, detailing in particular
how increasing the number of computing units affects numerical errors.

    \bibliographystyle{ieeetr}

    \bibliography{bibliography}

\end{document}